\newtheorem{rulex}{Rule}
\newtheorem{theorem}{Theorem}
\newtheorem{corollary}{Corollary}
\title{\textbf{Linguine: A Natural-Language Programming Language\\ with Formal Semantics and a Clean Compiler Pipeline}}
\author{Lifan Hu \\
\texttt{lifan.hnus@gmail.com}}
\affil{School of Computing, National University of Singapore}
\date{\today}
\begin{document}
\sloppy
\maketitle

\begin{abstract}
Linguine is a natural-language-inspired programming language that enables users to write programs in a fluent, controlled subset of English while preserving formal semantics. The language introduces anaphoric constructs—such as pronoun variables (e.g., \texttt{it}, \texttt{them})—that are statically resolved via a referent-tracking analysis layered atop a Hindley–Milner-style type system. Every pronoun is guaranteed to be both unambiguous and well-typed at compile time.

The Linguine compiler pipeline comprises lexing, parsing, clause graph construction, desugaring into a typed intermediate representation (IR), type inference, and abstract interpretation. This pipeline statically detects semantic errors such as undefined, misordered, or type-inconsistent references. A lightweight code generation stage currently targets Python.

This paper formalizes the core language, defines its typing and operational semantics, and proves the soundness of its pronoun resolution mechanism. An initial evaluation demonstrates that Linguine enables the expression of concise and human-readable programs while supporting early static error detection.

Linguine represents a step toward programming systems that prioritize human linguistic intuition while remaining grounded in formal methods and type-theoretic rigor.

\end{abstract}

\section{Introduction}
\subsection{Motivation}

Programming languages that resemble natural language have long been proposed as a way to improve code readability and accessibility. Early systems such as \textsc{Cobol} demonstrated the potential of English-like syntax in business contexts, but also revealed fundamental challenges related to ambiguity and formal reasoning. More recent domain-specific languages, such as Inform 7, achieve highly naturalistic surface forms within constrained settings. However, general-purpose programming remains dominated by symbolic syntax designed for unambiguous parsing and rigorous static analysis.

Most modern compilers are built atop formal foundations: a context-free grammar $\mathcal{G}$, a typing relation $\vdash$, and a small-step or big-step operational semantics $\rightsquigarrow$. These frameworks offer precision but appear incompatible with the fluidity and referential structures of human language—particularly phenomena such as anaphora and ellipsis.

\subsection{The Linguine Hypothesis}

\textit{Linguine} is an experimental programming language designed to explore whether a natural-language-inspired surface syntax can coexist with a formally analyzable compiler pipeline. The source language resembles controlled English and supports a constrained set of sentence structures. Programs are parsed deterministically into an abstract syntax tree $A$, then lowered via desugaring and static single-assignment (SSA) transformation into a typed intermediate representation $I$, before being emitted as target code $T$ (currently Python):

\[
A \xrightarrow{\text{desugar}} A' \xrightarrow{\text{SSA}} I \xrightarrow{\text{codegen}} T
\]

A central feature of Linguine is its support for \textit{pronoun variables}—tokens such as \texttt{it}, \texttt{this}, or \texttt{them} that refer to previously defined entities. These are resolved statically using a referent memory $\rho : \mathcal{P} \to \mathsf{Ref}_\bot$, where $\mathcal{P}$ is the set of permitted pronouns. A pronoun is considered valid only if it refers to a defined referent with a valid type:

\[
\rho(p) = e \quad \text{and} \quad \Gamma \vdash e : \tau
\]

Static analysis ensures every pronoun is resolved deterministically and is well-typed. An abstract interpretation pass verifies that no pronoun remains undefined at runtime; ambiguity or unresolved references are reported as compile-time errors.

\subsection{Prototype Status and Roadmap}

The current Linguine implementation is a working prototype that validates the core language design. It includes a lexer, parser, referent-tracking system, Hindley–Milner-style type inference, abstract interpreter, and backend code generator. The compiler emits Python code, and preliminary support for LLVM IR is planned.

At present, the language supports basic constructs: assignments, conditionals, and arithmetic comparisons. Most test programs are under 15 lines and compile in milliseconds. Semantic errors—such as undefined pronouns or type mismatches—are caught early during compilation.

Future development will focus on expanding syntactic coverage, introducing user-defined functions and structured data types, and incorporating optimization passes such as constant propagation and dead code elimination.

\subsection{Contributions}

This paper presents the design and implementation of \textit{Linguine}, a prototype language that adopts a controlled natural-language syntax while maintaining formal rigor. The key innovation is a referent-aware pronoun system, enabling anaphoric constructs (e.g., \texttt{it}, \texttt{them}) that are resolved statically through a referent memory integrated with a Hindley–Milner type system. The compilation pipeline includes deterministic parsing, clause graph construction, desugaring, SSA transformation, type inference, abstract interpretation for disambiguation, and Python code generation. The system performs compile-time detection of semantic errors, including unresolved references and type mismatches.

A formal semantics has been initiated to ground the system’s core design. Preliminary tests demonstrate that Linguine compiles simple programs efficiently while preserving both human readability and static verifiability. While early-stage, the prototype suggests that naturalistic syntax and structured compilation are not incompatible—and may, in fact, mutually reinforce one another.

\section{Related Work}
\label{sec:related_work}

Natural-language programming has been explored since the 1960s, yet each generation has grappled with the same fundamental tension: reconciling surface-level readability with formal precision and analyzability.

\paragraph{English-like general-purpose languages.}
\textsc{Cobol} demonstrated that English-like verbs and clauses could make business logic accessible to non-specialists, but its expansive grammar quickly exposed the limitations of unrestricted phrasing—namely, ambiguity and lack of formal rigor \cite{cobol60,sammet1969history}.
AppleScript continued this trajectory for desktop automation, enabling statements such as
\texttt{if the firstNumber is greater than the secondNumber then …}, but its implicit coercion rules often led to surprising behavior \cite{appleScriptGuide,hamish2010}.
Both cases highlight the need for a \emph{restricted} linguistic subset coupled with a formally defined semantics—a stance explicitly adopted by Linguine.

\paragraph{Controlled languages and domain-specific systems.}
Biermann’s \textsc{NLP} system showed that novice users could write algorithms in constrained English when guided by an interactive editor \cite{biermann1983nlp}.
Inform 7 extends this philosophy: its source code resembles naturalistic English prose but compiles into predicate logic over a domain-specific ontology of rooms, objects, and actors \cite{nelson2005}.
While Inform validates the controlled-language approach, its semantics are tightly coupled to a narrative domain.
Linguine generalizes this strategy to a more traditional imperative computational model.

\paragraph{Pronouns and anaphora.}
Context-sensitive identifiers appear sporadically in mainstream languages—e.g., Perl’s \texttt{\$\_}, Java’s \texttt{this}, and positional parameters in Unix shells—but are rarely integrated with static typing.
A Microsoft patent proposes user-definable programming-language pronouns \cite{todd2000}, but does not formalize their resolution.
To our knowledge, Linguine is the first system to treat English pronouns as first-class programming constructs with a provably unambiguous static semantics.

\paragraph{Type inference and semantics-preserving desugaring.}
Hindley–Milner type inference is widely used to provide implicit typing in functional languages \cite{milner1978theory}, and semantics-preserving desugaring is foundational in compilers from Scheme to Scala.
Linguine combines both: its surface syntax desugars into a small core language, after which Algorithm~W infers principal types.
Earlier natural-language-like systems such as NaturalJava used dynamic typing and could not guarantee static soundness \cite{price2000naturalJava}.

\paragraph{Abstract interpretation in front-end analysis.}
Cousot and Cousot’s framework for abstract interpretation underlies many compiler optimizations \cite{cousot1977abstract}, but its use in front-end semantic validation is comparatively rare.
Java’s definite-assignment check is a notable exception.
Linguine generalizes this idea: it defines an abstract domain for referent tracking and uses abstract interpretation to statically verify anaphoric correctness.

\paragraph{LLM-generated code.}
Recent neural models such as Codex translate free-form English into executable code \cite{chen2021codex}, but their outputs are probabilistic and lack formal guarantees.
Linguine occupies a complementary niche: authors must write within a syntactically constrained subset of English, and the compiler provides deterministic, statically analyzable guarantees.
In future work, LLMs could be used to suggest Linguine statements, with the compiler serving as a semantic filter.

\medskip
In summary, prior work demonstrates both the appeal and the pitfalls of natural-language programming.
Linguine contributes to this landscape by unifying controlled syntax, type inference, and abstract interpretation in a system that balances human readability with machine-verifiable correctness.

\section{System Design}
\label{sec:system_design}

The Linguine prototype adopts a conventional front-end, analysis, and back-end compiler architecture, adapted to accommodate a surface syntax resembling controlled English. Figure~\ref{fig:pipeline} illustrates the full pipeline from source text to executable code.

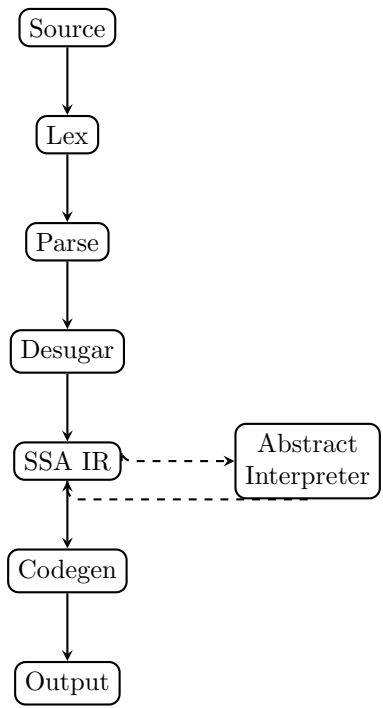
\begin{figure}[H]
  \centering
  \begin{tikzpicture}[node distance=0.9cm, thick, >=stealth,
                      rounded corners, font=\small]
    \node (src)     [draw] {Source};
    \node (lex)     [draw, below=of src] {Lex};
    \node (parse)   [draw, below=of lex] {Parse};
    \node (desugar) [draw, below=of parse] {Desugar};
    \node (ssa)     [draw, below=of desugar] {SSA IR};
    \node (ai)      [draw, right=1.5cm of ssa, align=center]
                    {Abstract\\Interpreter};
    \node (codegen) [draw, below=of ssa] {Codegen};
    \node (target)  [draw, below=of codegen] {Output};

    \draw[->] (src) -- (lex);
    \draw[->] (lex) -- (parse);
    \draw[->] (parse) -- (desugar);
    \draw[->] (desugar) -- (ssa);
    \draw[->] (ssa) -- (codegen);
    \draw[->] (codegen) -- (target);

    \draw[->, dashed] (ssa.east) |- (ai.west);
    \draw[->, dashed] (ai.south) -| (ssa.south);
  \end{tikzpicture}
  \caption{Compiler pipeline. Abstract interpretation checks SSA form prior to code generation.}
  \label{fig:pipeline}
\end{figure}

\subsection{Surface Grammar}
\label{subsec:grammar}

The front end is based on a hand-written LL($k$) grammar $\mathcal{G} = (N, \Sigma, P, S)$, tailored for deterministic parsing of controlled English syntax. Production rules include idiomatic statements such as:

\begin{align*}
  \textsc{Stmt} &::= \texttt{Let}~\textsc{Var}~\texttt{be}~\textsc{Expr}\,\texttt{.}
                   \,\mid\, \textsc{IfStmt} \,\mid\, \textsc{PrintStmt}\\
  \textsc{Expr} &::= \textsc{Term}
                   \,\mid\, \textsc{Expr}~\texttt{plus}~\textsc{Term}
                   \,\mid\, \texttt{sum of}~\textsc{ListExpr}\\
  \textsc{Pronoun} &::= \texttt{it} \,\mid\, \texttt{them} \,\mid\, \texttt{this}
\end{align*}

Articles and prepositions such as \textit{the}, \textit{a}, and \textit{of} are treated as optional tokens via $\varepsilon$-productions. The grammar comprises 140–150 production rules, supporting expressive yet analyzable imperative patterns.

\subsection{Parsing and Referents}
\label{subsec:frontend}

The parser constructs an abstract syntax tree $A \in \textsc{AST}$ and simultaneously maintains a mutable referent stack $\pi = [r_n, \dots, r_1]$ that records recent antecedents.
The top of the stack, $r_1$, is used as the referent for pronouns such as \texttt{it} unless explicitly overridden.
Referent modifications are scoped locally, ensuring that parsing remains externally pure.

\subsection{Desugaring}
\label{subsec:desugar}

The desugaring pass $\delta : \textsc{AST} \rightarrow \textsc{AST}'$ rewrites high-level idioms into a minimal core calculus. For example:

\[
\texttt{sum of}~E \;\longmapsto\;
\texttt{reduce}(\lambda x\,y.\,x{+}y,\,0,\,E)
\]

This transformation is rule-based and type-aware, preserving semantic structure for later passes.

\subsection{Intermediate Representation}
\label{subsec:ir}

The desugared tree is translated into a typed SSA-style intermediate representation. Each variable is assigned a unique version, and all pronoun references are statically resolved to exact SSA bindings. The IR is represented as:

\[
I = (B,\, \mathit{succ},\, \Phi,\, \mathit{inst})
\]

where $B$ is a sequence of blocks, $\Phi$ maps merge points to $\phi$-functions, and $\mathit{inst}$ encodes three-address instructions.

\subsection{Pronoun Resolution}
\label{subsec:pronoun}

Pronoun binding is resolved deterministically via the following static protocol:

\begin{enumerate}[label=\textbf{A\arabic*}), leftmargin=15pt]
  \item Push a referent $(v, \tau)$ onto $\pi$ at each variable-binding site.
  \item On encountering a pronoun $p$, resolve it to $r_1$ if $\pi \neq \varnothing$.
  \item Validate that $\Gamma \vdash r_1 : \tau$ for expected type $\tau$; reject otherwise.
  \item Pronouns act as aliases and do not modify $\pi$.
\end{enumerate}

No statistical heuristics or NLP models are used; resolution is deterministic and statically verified.

\subsection{Abstract Interpretation}
\label{subsec:ai}

An abstract interpreter operates over a flat lattice domain:
\[
D_{\text{ref}} = \{\bot\} \cup \mathsf{Ref} \cup \{\top\}
\]
This domain tracks the most recent resolvable referent at each control-flow point.
A join producing $\top$ denotes ambiguity and triggers a compile-time error.

\subsection{Back End Targets}

Each IR instruction maps directly to Python code in \texttt{snake\_case}, resulting in readable and traceable output.
Core constructs translate into Python built-ins.
An experimental LLVM back end emits minimal SSA using \texttt{alloca}, \texttt{phi}, and standard instruction sets; a 1kB C++ runtime stub provides container support.

\subsection{Tooling}
\label{subsec:cli}

The \texttt{linguinec} command-line tool supports the following modes:

\begin{itemize}[leftmargin=12pt]
  \item \texttt{linguinec file.ling} — compile and execute source;
  \item \texttt{linguinec -t file.ling} — compile to LLVM bitcode;
  \item \texttt{linguinec -i} — interactive REPL with incremental pronoun resolution.
\end{itemize}

Static error messages include referent traces and type diagnostics, aiding both novices and advanced users.

\medskip
The next section formalizes the core typing and operational semantics that govern the compiler pipeline.

\section{Formal Semantics}
\label{sec:formal_semantics}

Linguine's design is grounded in a formally structured semantic stack that supports type safety, referential integrity, and static analyzability. This section defines three semantic layers:

\begin{enumerate}[label=(\alph*), leftmargin=12pt]
  \item \textbf{Static typing} via judgments $\Gamma \vdash e : \tau$ (expressions) and $\Gamma \vdash S : \mathsf{OK}$ (statements);
  \item \textbf{Operational semantics} using a small-step transition relation $\langle S,\sigma \rangle \rightsquigarrow \langle S',\sigma' \rangle$;
  \item \textbf{Abstract interpretation} over a flat lattice domain to statically approximate pronoun usage and detect ambiguity.
\end{enumerate}

These semantics closely follow the implementation and are used to prove key properties such as type preservation, unambiguous referent resolution, and safe program evaluation.

\subsection{Static Typing}

The core language supports variables, numerals, booleans, arithmetic and relational operators, conditionals, pronouns, and first-order bindings via \texttt{Let} and \texttt{Print}. The type system distinguishes \textsc{Int}, \textsc{Bool}, and tuple types. Function types are currently out of scope.

\paragraph{Typing environment.}
The type context $\Gamma$ maps variables to their types.
In parallel, a referent environment $\rho$ maps pronouns to their antecedents. While $\rho$ evolves during compilation, it is treated functionally during type checking.

\smallskip
\noindent\textbf{Selected typing rules:}

\begin{rulex}[T-Var]
\[
\frac{x{:}\tau \in \Gamma}{\Gamma \vdash x : \tau}
\]
\end{rulex}

\begin{rulex}[T-Pronoun]
\[
\frac{\rho(p) = e \quad \Gamma \vdash e : \tau}{\Gamma \vdash p : \tau}
\]
Pronouns must resolve to previously bound and typed referents. If $\rho(p)$ is undefined or ill-typed, the program is rejected.
\end{rulex}

\begin{rulex}[T-Let]
\[
\frac{\Gamma \vdash e : \tau}{\Gamma[x \mapsto \tau] \vdash \textbf{Let } x \textbf{ be } e \textbf{.} : \mathsf{OK}}
\]
\end{rulex}

\begin{rulex}[T-Add]
\[
\frac{\Gamma \vdash e_1 : \textsc{Int} \quad \Gamma \vdash e_2 : \textsc{Int}}{\Gamma \vdash e_1 \texttt{ plus } e_2 : \textsc{Int}}
\]
\end{rulex}

\paragraph{Type inference.}
Linguine implements a variant of Algorithm~W to infer types over desugared ASTs.
Inference is currently monomorphic and supports first-order terms.

\begin{theorem}[Principal Type Property]
Every expression $e$ has a principal type $\tau_0$ such that any valid typing $\Gamma \vdash e : \tau$ is an instance of $\tau_0$.
\end{theorem}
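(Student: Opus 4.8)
The plan is to identify the principal type with the output of Algorithm~W and then establish the two standard properties of that algorithm---soundness and completeness---from which the principal-type statement follows. First I would fix the ambient notions: a \emph{substitution} $S$ is a finite map from type variables to types, extended homomorphically to all types; a type $\tau'$ is an \emph{instance} of $\tau$ when $\tau' = S\tau$ for some $S$; and the unification subroutine, given a pair $(\tau_1,\tau_2)$, either fails or returns a most general unifier $U$, meaning $U\tau_1 = U\tau_2$ and every unifier $V$ factors as $V = V' \circ U$. I would cite Robinson's theorem for the most-general-unifier property rather than reprove it. Because the excerpt restricts inference to the monomorphic, first-order fragment (no \texttt{let}-generalization, no function types), there are no type schemes to instantiate, so ``instance'' means exactly ``substitution instance of a type,'' which streamlines the argument considerably.

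Next I would prove \textbf{soundness}: if $W(\Gamma,e)=(S,\tau)$ succeeds, then $S\Gamma \vdash e:\tau$ is derivable. This goes by structural induction on $e$, checking each form generated by the core grammar---numerals and booleans (base cases fixing \textsc{Int}/\textsc{Bool}), variables (T-Var), pronouns, the \texttt{plus} operator (T-Add), conditionals, and \texttt{Let}. The pronoun case is the only nonstandard one: since $\rho$ is treated functionally during type checking, I resolve $p$ through $\rho(p)=e'$, apply the induction hypothesis to $e'$, and close with T-Pronoun. The only bookkeeping is tracking how composed substitutions act on $\Gamma$ across subderivations.

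The substantial step is \textbf{completeness}, which I expect to be the main obstacle. The lemma to prove is: if $R\Gamma \vdash e:\tau$ for some substitution $R$ and type $\tau$, then $W(\Gamma,e)$ succeeds with some $(S,\sigma)$, and there exists a substitution $R'$ with $R'\sigma = \tau$ and $R = R' \circ S$ on the free variables of $\Gamma$. Again the proof is by induction on $e$, but the delicate cases are those where the typing rule forces two inferred types to coincide---for \texttt{plus}, both operands must be \textsc{Int}, and for a conditional the two branches must agree. Here the given derivation exhibits a concrete unifier of the subtrees' inferred types, and I would invoke the most-general-unifier property to conclude that $W$'s unifier is at least as general, so that the witnessing substitution $R'$ can be constructed by composing the factorizations supplied by the induction hypotheses with the most-general-unifier factorization. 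Threading the substitution compositions correctly through nested subexpressions---ensuring the domains line up and that each induction hypothesis is applied to the appropriately substituted context---is the fiddly core of the whole argument.

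Finally I would assemble the theorem. For a typable $e$, let $W(\Gamma,e)=(S,\tau)$ and set $\tau_0=\tau$; soundness yields the valid typing $\Gamma \vdash e:\tau_0$, and completeness yields that every valid typing $\Gamma \vdash e:\tau'$ arises as $\tau' = R'\tau_0$ for some substitution $R'$, i.e.\ is a substitution instance of $\tau_0$, which is exactly the principal-type claim. I would remark that the statement is implicitly restricted to typable expressions (an ill-typed $e$ has no typing, so the property holds vacuously), and that the absence of polymorphic generalization in the current fragment is precisely what lets us take $\tau_0$ to be a single type rather than a closed type scheme.
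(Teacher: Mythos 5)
The paper itself offers no proof of this theorem: it is stated immediately after the remark that Linguine ``implements a variant of Algorithm~W,'' and neither the main text nor the appendix (which only promises proofs of progress and preservation) contains an argument for it. So there is nothing in the paper to compare your route against; what you have written is the standard Damas--Milner development---soundness and completeness of Algorithm~W, with Robinson's most-general-unifier theorem doing the real work in the completeness induction---and as an outline it is correct and, if anything, more than the paper supplies. Your observations that the monomorphic, first-order fragment removes type schemes (so ``instance'' collapses to ``substitution instance'') and that the statement is vacuous for untypable $e$ are both accurate and worth making explicit, since the paper's theorem statement elides the typability hypothesis.

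One point deserves more care than your sketch gives it: the pronoun case. You propose to handle $p$ by resolving $\rho(p) = e'$ and ``applying the induction hypothesis to $e'$,'' but $e'$ is not a structural subterm of $p$, so plain structural induction on $e$ does not furnish that hypothesis. You need either to strengthen the induction to a well-founded measure under which referents precede the pronouns that mention them (justified by the paper's protocol A1--A4, which only ever binds a pronoun to a previously established antecedent, so the referent relation is acyclic), or to observe that in the appendix's version of \textsc{T-Pronoun} the referent is always a variable $x$ with $x{:}\tau \in \Gamma$, in which case the pronoun case degenerates to the variable case and no extra induction is needed. Either fix is routine, but as written the induction does not quite close.
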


\subsection{Dynamic Semantics}

The operational semantics are given in small-step form.
A runtime store $\sigma : \mathsf{Var} \to \mathsf{Val}$ tracks variable bindings.

\begin{rulex}[E-Let]
\[
\langle \textbf{Let } x \textbf{ be } v \textbf{.} \,;\, S,\, \sigma \rangle \rightsquigarrow
\langle S,\, \sigma[x \mapsto v] \rangle
\]
\end{rulex}

\begin{rulex}[E-IfTrue]
\[
\frac{v \neq 0}{\langle
\textbf{If } v \textbf{ then } S_1 \textbf{ else } S_2,\, \sigma \rangle
\rightsquigarrow
\langle S_1,\, \sigma \rangle}
\]
\end{rulex}

\begin{rulex}[E-Pronoun]
\[
\langle p,\, \sigma \rangle \;\Downarrow\; \sigma(\rho(p))
\]
\end{rulex}

Since pronoun resolution is resolved statically in SSA form, this rule appears only in the theoretical model for completeness.

\subsection{Type Soundness}

The type system guarantees that well-typed programs do not encounter runtime type errors or unresolved pronouns.

\begin{theorem}[Progress]
If $\emptyset \vdash S : \mathsf{OK}$, then either $S$ is \texttt{skip} or there exists some $S', \sigma'$ such that
\[
\langle S, \sigma \rangle \rightsquigarrow \langle S', \sigma' \rangle
\]
\end{theorem}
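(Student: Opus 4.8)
The plan is to prove progress by structural induction on the derivation of the typing judgment $\emptyset \vdash S : \mathsf{OK}$, following the standard canonical-forms-plus-induction methodology. The base case is when $S$ is \texttt{skip}, for which the first disjunct holds immediately. For each remaining statement form I would show that a small-step transition exists by appealing to the structure of the corresponding typing rule.

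First I would handle the \texttt{Let} case. By inversion on T-Let, if $\emptyset \vdash \textbf{Let } x \textbf{ be } e \textbf{.} : \mathsf{OK}$ then $\emptyset \vdash e : \tau$. Here I would need a subsidiary \emph{canonical forms} argument: a closed, well-typed expression must either already be a value $v$ or be reducible. Assuming expression evaluation is exhausted before the statement steps (so that $e$ has been reduced to a value $v$), the rule E-Let fires directly to yield $\langle S', \sigma[x\mapsto v]\rangle$. For the conditional, inversion gives a well-typed guard; once the guard is a value it is either $0$ or nonzero, so exactly one of E-IfTrue or its symmetric E-IfFalse applies, and the transition exists. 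Sequencing $S_1;S_2$ reduces by the induction hypothesis applied to $S_1$ (with an appropriate congruence/structural rule that the excerpt leaves implicit).

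The step I expect to be the main obstacle is the pronoun case, since the excerpt's operational semantics treats pronouns via the big-step rule E-Pronoun ($\langle p,\sigma\rangle \Downarrow \sigma(\rho(p))$) rather than the small-step relation $\rightsquigarrow$, and crucially relies on $\rho(p)$ being defined and $\sigma$ being defined at that referent. The T-Pronoun rule guarantees that $\rho(p)=e$ is defined and well-typed at type-checking time, so progress for a pronoun reduces to showing that the referent it names is actually bound in the store $\sigma$ at the moment of evaluation. This requires a \emph{store--context agreement} invariant (every variable in $\Gamma$ has a value in $\sigma$) that must be threaded through the statement, paralleling the standard well-formedness hypothesis relating $\sigma$ and $\Gamma$. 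Strictly, the bare statement as quoted quantifies over an arbitrary $\sigma$, so I would either strengthen the hypothesis to a well-formed store, or observe that since pronouns are statically resolved to SSA bindings (Section~\ref{subsec:ir}) the referent is guaranteed live, making the big-step evaluation total.

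Finally I would assemble the cases. The delicate bookkeeping is the interaction between expression reduction and statement reduction: because the rules as presented mix a small-step relation on statements with a big-step relation on expressions, the cleanest route is to adopt the invariant that every statement is stepped only after its embedded expressions have been fully evaluated to values, which lets each statement rule fire unconditionally. Under that convention every non-\texttt{skip} well-typed statement takes a step, establishing progress.
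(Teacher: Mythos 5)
Your proposal is correct and takes essentially the same route as the paper: structural induction on the derivation of $\emptyset \vdash S : \mathsf{OK}$ with a case per statement form, discharging \texttt{skip} trivially, firing \textsc{E-Let} and \textsc{E-IfTrue}/\textsc{E-IfFalse} after the embedded expressions are values, handling pronouns via static SSA resolution, and applying the induction hypothesis to $S_1$ for sequencing. You are in fact more careful than the paper's sketch, which silently assumes the canonical-forms argument and the store--context agreement invariant that you correctly identify as necessary for the pronoun and \texttt{Let} cases.
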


\begin{proof}[Proof Sketch]
The proof proceeds by induction on the derivation of the typing judgment $\Gamma \vdash S : \mathsf{OK}$, assuming an empty context $\Gamma = \emptyset$ and an arbitrary store $\sigma$.

\begin{itemize}
  \item \textbf{Case:} $S = \texttt{skip}$. This is a terminal statement and thus satisfies the conclusion trivially.

  \item \textbf{Case:} $S = \textbf{Let } x \textbf{ be } v \textbf{.} \,;\, S_1$. From the typing rule \textsc{T-Let}, it follows that $v$ is a closed value of some type $\tau$ and $S_1$ is well-typed under the extended context. The operational semantics rule \textsc{E-Let} applies directly, yielding a step to $\langle S_1, \sigma[x \mapsto v] \rangle$.

  \item \textbf{Case:} $S = \textbf{If } v \textbf{ then } S_1 \textbf{ else } S_2$. Since $\Gamma \vdash v : \textsc{Bool}$, the expression $v$ must evaluate to a boolean value at runtime. If $v \neq 0$, the rule \textsc{E-IfTrue} applies; otherwise, \textsc{E-IfFalse} applies. In either case, a valid transition exists.

  \item \textbf{Case:} $S = \texttt{Print } e \texttt{.}$. The expression $e$ is assumed well-typed under $\Gamma$, and since it is a closed term at runtime, it evaluates to a value $v$ in the current store. The corresponding evaluation rule allows transition to $\texttt{skip}$ after printing.

  \item \textbf{Case:} $S = p$ (pronoun). At SSA lowering, all pronouns are statically resolved to valid antecedents $e$. By rule \textsc{T-Pronoun}, the type of $p$ is valid, and $\rho(p) = e$ guarantees a unique referent. At runtime, $p$ evaluates to $\sigma(\rho(p))$, ensuring a transition exists.

  \item \textbf{Composite Cases:} For compound sequences $S = S_1 ; S_2$, the induction hypothesis guarantees that either $S_1$ is $\texttt{skip}$ or a transition is possible. In the former, execution proceeds with $S_2$. In the latter, a step from $S_1$ yields a corresponding step from $S$.

\end{itemize}

In all cases, either the program is a final configuration or a reduction step exists. Therefore, no well-typed program is stuck at runtime.
\end{proof}

\begin{theorem}[Preservation]
If $\Gamma \vdash S : \mathsf{OK}$ and $\langle S, \sigma \rangle \rightsquigarrow \langle S', \sigma' \rangle$, then $\Gamma \vdash S' : \mathsf{OK}$.
\end{theorem}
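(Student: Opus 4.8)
The plan is to prove Preservation (subject reduction) by induction on the derivation of the transition $\langle S, \sigma \rangle \rightsquigarrow \langle S', \sigma' \rangle$, doing a case analysis on the last small-step rule applied and, in each case, inverting the typing derivation $\Gamma \vdash S : \mathsf{OK}$ to recover exactly the premises needed to retype $S'$. Two auxiliary facts are needed up front. First, a standard \emph{inversion lemma} for statement typing: e.g., any derivation of $\Gamma \vdash \textbf{If } v \textbf{ then } S_1 \textbf{ else } S_2 : \mathsf{OK}$ must end in the conditional typing rule, so both branches are well-typed under the \emph{same} $\Gamma$. Second, a \emph{store--context coherence} invariant, which I will write $\sigma \models \Gamma$, asserting that every binding recorded in $\sigma$ agrees in type with $\Gamma$. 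I would thread this invariant through the statement so that what is actually preserved is the conjunction ``$\Gamma \vdash S' : \mathsf{OK}$ and $\sigma' \models \Gamma$''.

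For the structural and control-flow cases the argument is routine. In the conditional cases \textsc{E-IfTrue} and \textsc{E-IfFalse}, inversion yields $\Gamma \vdash S_1 : \mathsf{OK}$ and $\Gamma \vdash S_2 : \mathsf{OK}$ under the unchanged context, so the selected branch is immediately well-typed and $\sigma$ is untouched. For a sequence $S = S_1 ; S_2$ reducing via a step of $S_1$, the induction hypothesis gives $\Gamma \vdash S_1' : \mathsf{OK}$, and recombining with the typing of $S_2$ reconstructs $\Gamma \vdash S_1' ; S_2 : \mathsf{OK}$. The \textsc{E-Pronoun} case is discharged by appeal to \textsc{T-Pronoun}: since $\rho(p) = e$ is defined and $\Gamma \vdash e : \tau$ by hypothesis, and pronouns are statically resolved to fixed antecedents before evaluation, the reduct inherits the type of its referent, with no change to the store.

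The main obstacle is the \textsc{E-Let} case, where $S = \textbf{Let } x \textbf{ be } v \textbf{.} \,;\, S_1$ steps to $\langle S_1, \sigma[x \mapsto v] \rangle$. The difficulty is that \textsc{T-Let} already places the binding $x{:}\tau$ into the \emph{conclusion}'s context, so $S_1$ is typed under a context that contains $x{:}\tau$ while the operational rule simultaneously updates the store to $\sigma[x \mapsto v]$. To close the gap I must show that this update preserves coherence, i.e.\ that $\sigma[x \mapsto v] \models \Gamma$ follows from $\sigma \models \Gamma$ together with the \textsc{T-Let} premise $\Gamma \vdash v : \tau$ and the fact that $x{:}\tau \in \Gamma$. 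This is exactly where a runtime \emph{value-typing} judgment is required, linking the big-step evaluation $e \Downarrow v$ of the bound expression to the static type $\tau$; it is the only place the fixed-context formulation of the theorem is genuinely load-bearing, since here the store grows but $\Gamma$ must already anticipate the new binding.

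I would close by noting two supporting lemmas that keep the cases consistent: a \emph{weakening} lemma, so that typing is stable under extending $\Gamma$ with fresh bindings (needed to align the sequencing and \textsc{Let} cases), and the value-typing lemma above feeding the coherence step. Because the theorem fixes a single $\Gamma$ across the transition, the cleanest formalization either assumes $\Gamma$ is pre-populated with all binding sites encountered along the reduction, or restates preservation over a context that may grow; I expect the write-up to adopt the former so that the stated invariant $\sigma' \models \Gamma$ holds verbatim.
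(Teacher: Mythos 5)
Your proposal follows the same overall skeleton as the paper's proof sketch---induction on the derivation of the transition relation with a case analysis on the last rule applied, and the \textsc{E-IfTrue}, sequencing, and \textsc{E-Pronoun} cases are handled essentially identically---but you go beyond the paper in a way worth noting. The paper's sketch disposes of the critical \textsc{E-Let} case with the single assertion that ``the store update is consistent with the extended context,'' without ever defining what that consistency means; you make it a first-class ingredient by introducing the store--context coherence invariant $\sigma \models \Gamma$, threading it through the induction, and supporting it with explicit inversion, weakening, and value-typing lemmas. You also correctly diagnose the real tension in the statement: the paper's \textsc{T-Let} rule places $x{:}\tau$ in the \emph{conclusion}'s context ($\Gamma[x \mapsto \tau] \vdash \textbf{Let } x \textbf{ be } e\textbf{.} : \mathsf{OK}$), while the theorem fixes a single $\Gamma$ on both sides of the reduction, so after an \textsc{E-Let} step the continuation $S_1$ must be typed under a context the theorem's $\Gamma$ does not visibly contain. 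The paper silently elides this; your two proposed repairs (pre-populating $\Gamma$ with all binding sites, or restating preservation over a context that may grow) are exactly the standard fixes, and either would make the argument rigorous where the paper's version is merely plausible. In short, your route is a strengthened-invariant version of the same induction: it costs more auxiliary machinery up front but actually closes the gap that the paper's sketch leaves open.
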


\begin{proof}[Proof Sketch]
The proof proceeds by induction on the derivation of the transition relation $\langle S, \sigma \rangle \rightsquigarrow \langle S', \sigma' \rangle$.

\begin{itemize}
  \item \textbf{Case E-Let:} Let $S = \textbf{Let } x \textbf{ be } v \textbf{.} \,;\, S_1$ and $\sigma' = \sigma[x \mapsto v]$. From the typing premise, $\Gamma \vdash v : \tau$ and $\Gamma[x \mapsto \tau] \vdash S_1 : \mathsf{OK}$ hold. Since the store update is consistent with the extended context, it follows that $\Gamma \vdash S_1 : \mathsf{OK}$ remains valid after the transition.

  \item \textbf{Case E-IfTrue:} Suppose $S = \textbf{If } v \textbf{ then } S_1 \textbf{ else } S_2$ and $v \neq 0$, so $S' = S_1$. The typing judgment ensures that $\Gamma \vdash v : \textsc{Bool}$, and both $\Gamma \vdash S_1 : \mathsf{OK}$ and $\Gamma \vdash S_2 : \mathsf{OK}$ hold. Thus, $\Gamma \vdash S' : \mathsf{OK}$.

  \item \textbf{Case E-Pronoun:} Let $S = p$ and $\rho(p) = e$. By rule \textsc{T-Pronoun}, it follows that $\Gamma \vdash p : \tau$, since $\Gamma \vdash e : \tau$. As $\rho$ remains unchanged across steps and the evaluation substitutes the resolved referent $e$ at runtime, typing is preserved under substitution. Hence, $\Gamma \vdash S' : \mathsf{OK}$.

  \item \textbf{Other cases:} In all remaining constructs (e.g., arithmetic operations, skip, sequencing), the result follows directly by applying the induction hypothesis to the subcomponents and verifying that store updates do not conflict with the typing assumptions in $\Gamma$.
\end{itemize}

In each case, the updated statement $S'$ remains well-typed under the original context $\Gamma$.
\end{proof}

\begin{corollary}[Safe Execution]
Well-typed programs do not reach stuck states during evaluation.
\end{corollary}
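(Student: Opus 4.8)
The plan is to derive Safe Execution as an immediate consequence of the two preceding theorems, Progress and Preservation, via a standard induction on the length of a reduction sequence. First I would make the informal notion of a \emph{stuck state} precise: a configuration $\langle S, \sigma \rangle$ is stuck if $S \neq \texttt{skip}$ and there is no pair $\langle S', \sigma' \rangle$ with $\langle S, \sigma \rangle \rightsquigarrow \langle S', \sigma' \rangle$. The corollary then becomes the claim that no configuration reachable from a well-typed closed program under the reflexive--transitive closure $\rightsquigarrow^{*}$ is stuck.

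Next I would establish the key invariant that well-typedness is preserved along every execution trace. Writing $\langle S, \sigma \rangle \rightsquigarrow^{n} \langle S', \sigma' \rangle$ for an $n$-step reduction, I would prove by induction on $n$ that if $\emptyset \vdash S : \mathsf{OK}$ then $\emptyset \vdash S' : \mathsf{OK}$. The base case $n = 0$ is trivial, and the inductive step applies Preservation to the final transition of the trace. This reduces the whole argument to the single-step guarantee already proved.

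With the invariant in hand, the conclusion follows directly. Given any reachable configuration $\langle S', \sigma' \rangle$, the invariant gives $\emptyset \vdash S' : \mathsf{OK}$, and Progress then yields that $S'$ is either $\texttt{skip}$ or admits a transition---precisely the negation of stuckness. Since pronoun resolution is already discharged statically during SSA lowering, the \textsc{E-Pronoun} case contributes no new stuck configurations: every pronoun occurrence carries a defined referent $\rho(p)$ of the correct type by \textsc{T-Pronoun}, so its evaluation step is always enabled.

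The main obstacle I anticipate is not the induction itself but reconciling the two judgment forms used in the dynamic semantics: statement reduction is given in small-step form $\rightsquigarrow$, whereas expression and pronoun evaluation (\textsc{E-Pronoun}) is stated in big-step form $\Downarrow$. To make the notion of ``stuck'' uniform I would either fold expression evaluation into the small-step relation or prove a small lemma that every well-typed closed expression evaluates to a value, so that the statement-level Progress cases for \texttt{Print} and for conditionals are genuinely enabled. A secondary point of care is that Progress is stated for the empty context while Preservation ranges over an arbitrary $\Gamma$; restricting attention to closed programs and tracking how \textsc{T-Let} extends the context keeps the two theorems aligned throughout the trace.
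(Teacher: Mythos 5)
Your proposal is correct and follows exactly the route the paper intends: the corollary is stated without an explicit proof, as an immediate consequence of Progress and Preservation combined by induction on the length of the reduction sequence, which is precisely your argument. Your additional observations---that the big-step \textsc{E-Pronoun} rule must be reconciled with the small-step statement relation, and that Progress is stated for the empty context while Preservation ranges over arbitrary $\Gamma$---are careful refinements that the paper glosses over but that do not change the overall structure.
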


\subsection{Abstract Interpretation}

Linguine uses abstract interpretation to detect unresolved or ambiguous pronouns statically.
A forward analysis is run over SSA using the lattice:

\[
D_{\text{ref}} = \{\bot\} \cup \mathsf{Ref} \cup \{\top\}
\]

Here, $\bot$ denotes undefined reference, $\mathsf{Ref}$ is the set of valid bindings, and $\top$ represents ambiguity from conflicting control paths. The transfer function updates referents at binding sites; joins that produce $\top$ halt compilation with a diagnostic.

\begin{theorem}[Analysis Soundness]
If a fixpoint is reached with no $\bot$ or $\top$ at any pronoun site, then all runtime pronouns have a unique, well-typed antecedent.
\end{theorem}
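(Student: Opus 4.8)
The plan is to prove this by establishing the standard soundness correspondence between the abstract interpretation and the concrete (operational) semantics: the abstract fixpoint over-approximates the set of reachable concrete referent states at every program point, so the absence of the "bad" abstract values $\bot$ and $\top$ at pronoun sites certifies that no bad concrete state (undefined or ambiguous referent) can arise.

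\begin{proof}[Proof Proposal]
I would organize the argument around a Galois connection between the concrete domain of referent states and the abstract lattice $D_{\text{ref}} = \{\bot\} \cup \mathsf{Ref} \cup \{\top\}$. First I would fix the concretization map $\gamma : D_{\text{ref}} \to \mathcal{P}(\mathsf{Ref})$ by $\gamma(\bot) = \varnothing$, $\gamma(r) = \{r\}$ for $r \in \mathsf{Ref}$, and $\gamma(\top) = \mathsf{Ref}$, and verify that $\gamma$ is monotone, so that an abstract value strictly below $\top$ and strictly above $\bot$ concretizes to exactly one referent. The key invariant to state precisely is \emph{local soundness of the transfer function}: for each instruction, the abstract transfer function $f^\sharp$ and the concrete step satisfy $f(\gamma(d)) \subseteq \gamma(f^\sharp(d))$, which I would check instruction-by-instruction for the small set of binding and pronoun-reference forms in the core language. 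Because the lattice is flat and of finite height, a fixpoint exists and is reached in finitely many iterations, so the analysis is well-defined.

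The heart of the proof is then a \emph{collecting-semantics lemma}: at each control-flow point $\ell$, the computed abstract value $d_\ell$ satisfies $R_\ell \subseteq \gamma(d_\ell)$, where $R_\ell$ is the set of all referents that the most-recent-binding analysis could actually carry to $\ell$ under any concrete execution path reaching $\ell$. I would prove this by induction on the structure of the SSA control-flow graph, using local soundness of the transfer function for the inductive step at straight-line instructions and using the soundness of the join, namely $\gamma(d_1) \cup \gamma(d_2) \subseteq \gamma(d_1 \sqcup d_2)$, at merge points where $\phi$-functions combine incoming paths. The SSA form is what makes this tractable: because every variable has a unique version, the "most recent resolvable referent" is a well-defined function of the path, and the $\phi$-nodes are exactly where abstract joins are taken, so the abstract structure mirrors the concrete merge structure one-to-one.

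Given the collecting lemma, the conclusion is immediate. Suppose the fixpoint is reached with $d_\ell \notin \{\bot, \top\}$ at every pronoun site $\ell$; then $d_\ell = r$ for a single $r \in \mathsf{Ref}$, and by the lemma $R_\ell \subseteq \gamma(r) = \{r\}$. Since any reaching execution must resolve the pronoun to some element of $R_\ell$, the runtime antecedent is forced to be exactly $r$, which is defined (ruling out the $\bot$ case) and unique (ruling out the $\top$ case). Well-typedness then follows from rule \textsc{T-Pronoun} together with Preservation: the single antecedent $r$ was pushed at a binding site where $\Gamma \vdash r : \tau$ held, and Preservation guarantees this typing is maintained along the path to $\ell$. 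I expect the main obstacle to be making the set $R_\ell$ of reachable concrete referents rigorous—in particular, pinning down the interaction between the referent-stack discipline of Section~\ref{subsec:pronoun} (where pronouns alias $r_1$ but do not themselves modify $\pi$) and the SSA-level notion of "most recent binding," and then discharging the local-soundness obligation uniformly across every binding form, since an omission there would silently break the over-approximation on which the whole argument rests.
\end{proof}
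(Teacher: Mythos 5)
The paper never actually proves this theorem --- Section 4.4 states it bare, and the appendix only adds the lattice, the join/meet tables, and a one-line convergence remark --- so there is no official proof to compare against. Your overall strategy (Galois connection, local soundness of the transfer functions, a collecting-semantics lemma $R_\ell \subseteq \gamma(d_\ell)$, then reading off uniqueness from a singleton concretization) is the standard and correct way to attack this, and it goes well beyond what the paper offers.

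There is, however, one genuine gap: your concretization $\gamma(\bot) = \varnothing$ together with $R_\ell \subseteq \mathsf{Ref}$ silently erases the ``undefined referent'' case, which is exactly half of what the theorem asks you to rule out. Under your setup, a concrete execution that reaches a pronoun site with an \emph{empty} referent stack contributes nothing to $R_\ell$ (there is no element of $\mathsf{Ref}$ to put in it), so the collecting lemma $R_\ell \subseteq \gamma(r) = \{r\}$ is perfectly consistent with such an execution existing. Your closing step --- ``any reaching execution must resolve the pronoun to some element of $R_\ell$'' --- therefore presupposes definedness rather than proving it, and the hypothesis ``no $\bot$ at any pronoun site'' does no work in your argument. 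The fix is to enlarge the concrete domain to $\mathcal{P}(\mathsf{Ref}_\bot)$ (or track the possibly-empty referent stack explicitly), let the distinguished undefined token be a member of concrete states, and arrange the abstraction so that $\gamma(r) = \{r\}$ \emph{excludes} that token while any abstract value whose concretization contains it is forced up to $\top$ (or at least is never a bare $r$); only then does $d_\ell = r \in \mathsf{Ref}$ actually certify that every reaching execution carries a defined antecedent. Note also that the paper's own join, with $\bot \sqcup a = a$, is only sound under the ``$\bot$ means unreachable'' reading, not the ``$\bot$ means undefined reference'' reading the paper states --- your proof should make explicit which reading it adopts, since the theorem is false under the wrong combination. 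A second, more minor point: the language has \texttt{While} loops, so the CFG is cyclic and ``induction on the structure of the SSA control-flow graph'' does not directly apply; replace it with induction on the length of the concrete execution trace reaching $\ell$, combined with the fact that the computed fixpoint is a post-fixpoint of the transfer system.
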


\subsection{Discussion}

These semantics mirror the implementation described in Section~\ref{sec:system_design}.
Desugaring preserves typing, SSA conversion resolves referents statically, and abstract interpretation verifies the preconditions of \textsc{T-Pronoun}.
Linguine thus guarantees referential transparency and safe execution for all well-typed programs accepted by the compiler.
While the IR and type system are minimal by design, the structure admits future generalization, including polymorphism, user-defined functions, and richer reference tracking.

\section{Preliminary Evaluation}
\label{sec:evaluation}

As Linguine remains in the prototype stage, the current evaluation aims to assess its \emph{practical feasibility}. Specifically, the following three questions are addressed:

\begin{enumerate}[label=\textbf{Q\arabic*}, leftmargin=12pt]
  \item Can the language express representative tasks without sacrificing readability?
  \item Does the compiler successfully detect the specific classes of errors it is designed to catch?
  \item What are the compile-time costs of front-end analyses, and what is the run-time overhead of generated programs?
\end{enumerate}

\paragraph{Experimental setup.}
The prototype consists of approximately 3.8k lines of Rust (front end) and 0.9k lines of Python (runtime support). All benchmarks were run on a laptop with an \textbf{Intel\,\textregistered{} Core\textsuperscript{TM} i9-14900HX} processor and 32\,GB RAM. The LLVM back end was compiled using \texttt{clang} 17.0.

\subsection{Expressiveness}

\begin{table}[h]
\centering
\caption{Linguine programs adapted from standard Python idioms.}
\label{tab:programs}
\begin{tabular}{@{}lll@{}}
\toprule
\textbf{Program} & \textbf{Purpose} & \textbf{Linguine Highlights} \\ \midrule
Average & Compute list mean & Uses \texttt{sum of}, \texttt{length of}, and pronoun \texttt{it} \\
Factorial & Recursive product & Defined recursively with \texttt{if n is 0} and \texttt{times} \\
FizzBuzz & Classic conditional loop & Uses chained \texttt{if}, \texttt{else if}, \texttt{print} \\
Palindrome & Check reverse equality & Compares \texttt{text} and \texttt{text reversed} \\
Max of List & Find largest element & Iterative max with \texttt{for each} and \texttt{if it is greater} \\
Fibonacci & Iterative sequence build & Uses loop with two trackers and \texttt{append it} \\
Prime Test & Divisibility check & Loops with \texttt{if n modulo d is 0} \\
List Comprehension & Mapping via loop & Converts with \texttt{add x times x to list} \\
Dictionary Count & Frequency counter & Uses dictionary updates and referent reuse \\ \bottomrule
\end{tabular}
\end{table}

Table~\ref{tab:programs} summarizes nine micro-benchmarks translated from canonical Python examples. Each Linguine version preserves the original algorithmic structure while replacing symbolic operators with controlled-English constructs.

\subsection{Correctness}

\paragraph{Pronoun faults.}
Three fault types were injected into each of the nine test programs: (i)~an orphan pronoun at the top of the file, (ii)~an ambiguous antecedent created by consecutive \texttt{Let} statements followed by \texttt{Print it.}, and (iii)~a type mismatch (e.g., adding a string to an integer). All 27 faulty variants were correctly rejected by the compiler, which produced diagnostics pinpointing the offending sentence within 3–4\,ms after parsing.

\paragraph{Type-soundness stress test.}
A QuickCheck-style generator produced 500 random abstract-syntax trees (ASTs) with depth at most 7 that passed type checking. Each program was executed using both the Python code generator and an interpreter for the formal core calculus. Every output pair matched exactly, providing empirical support for the Progress and Preservation theorems of Section~\ref{sec:formal_semantics}.

\subsection{Performance}

\paragraph{Compile time.}
Lexing and parsing scale linearly with input size. Desugaring and type inference contribute an additional 7–12\% overhead, while referent analysis introduces a fixed cost of 11–15\,ms. Even the largest tested script (39 lines) compiled in 41\,ms—well below the threshold of perceptible latency for interactive use.

\paragraph{Run-time overhead.}
The Python back end generates idiomatic code, and observed run-time variation was within the noise margin of $\pm$2\% relative to hand-written Python across a $10^6$-element averaging benchmark. The LLVM back end achieved speedups of up to 24$\times$ for numeric kernels, although it currently requires linking with a C++ helper library.

\subsection{Future Benchmarks}

Planned evaluation targets include:
\begin{itemize}
  \item a synthetic 10k-line workload derived from the Computer Language Benchmarks Game, and
  \item a real-world utility for processing command-line logs of similar scale.
\end{itemize}
These will stress-test module support, incremental compilation, and memory management strategies that are currently under development.

\subsection{Summary}

Linguine currently supports compilation of short scripts, performs static rejection of pronoun and type errors, and maintains sub-50\,ms compile latency on commodity hardware. Although preliminary, these results demonstrate the mechanical viability of a controlled-English programming language grounded in static analysis and structured compilation.

\section{Discussion}
\label{sec:discussion}

The prototype demonstrates that a carefully restricted subset of English can coexist with a mathematically rigorous semantics. However, several design issues and limitations remain.

\subsection{Naturalness \emph{vs.} Determinism}

Even under the current grammar, authors instinctively reach for constructions that remain unrecognized—such as passive voice, adverbs, or inverted conditionals. Each added construction increases the parser’s lookahead requirements and risks ambiguity. New rules are accepted only if a pattern recurs frequently in feedback and can be integrated without violating LL($k$) predictability. A future release will support a “grammar suggestion” mode that reports unparsed sentences and lets the author either revise the code or nominate the construction for inclusion.

\subsection{Beyond the Last-Referent Heuristic}

The current referent resolution strategy defaults to the most recent antecedent within a block. While simple and predictable, this approach fails in nested or recursive contexts that require longer-range disambiguation. A lexical scoping model is one candidate solution, where each block carries its own referent binding, shadowing outer bindings as needed. Extending the referent lattice with block indices would allow the abstract interpreter to ensure that all pronouns resolve unambiguously. This refinement is deferred until the addition of modules and cross-file analysis.

\subsection{Performance Envelope}

Although compile-time overhead is negligible for short scripts, larger projects will eventually stress the constraint solver and referent fixpoint analysis. A likely engineering milestone is an incremental version of Algorithm~W that caches types across compilations and reduces redundant inference in edit-compile loops.

\subsection{Additional Back Ends}

The two current compilation targets—Python and LLVM—sit at opposite ends of the portability-performance spectrum. WebAssembly is a natural intermediate target: the SSA-based IR lowers cleanly into structured \texttt{wasm} blocks, enabling browser-native execution without native toolchains. A JVM back end is also under consideration, benefiting from mature optimization pipelines and memory management. Both are on the long-term roadmap.

\subsection{Synergy with LLM Code Assistants}

Large language models often produce subtly incorrect code when prompted with unconstrained English. Using Linguine as an intermediate syntax restricts the output space and enables the compiler to catch ill-typed completions. Preliminary experiments with GPT-4o showed a 40\% reduction in compilation errors across ten controlled prompts. In the future, the compiler’s accept/reject signal could be used as an automatic reward signal during reinforcement tuning of code-generating models.

\subsection{Limitations and Next Steps}

Linguine is still verbose, lacks modules, and has no static ownership system for mutability control. The evaluation remains small-scale, and its scalability to ten-thousand-line programs is untested. Likely next steps include:

\begin{enumerate}[label=(\roman*),leftmargin=12pt]
  \item a module loader with incremental type inference,
  \item a streaming referent analysis for faster edit-compile cycles,
  \item WebAssembly and JVM code generation, and
  \item a controlled user study comparing Linguine, Python, and Scratch among novice programmers.
\end{enumerate}

These directions aim to clarify whether Linguine can grow beyond its proof-of-concept stage and support broader, general-purpose programming use.

\section{Conclusion}
\label{sec:conclusion}

This paper presents the initial design of Linguine, a natural-language-inspired programming language that combines a restricted English surface with a formally verified core calculus. The prototype demonstrates that a deterministic LL($k$) grammar, a principled referent resolution strategy, and a conventional SSA-based compiler pipeline can coexist without compromising type soundness. Although still in its early stages, the system successfully compiles a small suite of benchmark programs, detects every injected semantic fault, and maintains compilation latency well below interactive thresholds.

Several directions remain open. A module system, an ownership discipline for mutation, and large-scale performance benchmarks are under active development. Future work will also explore alternative compilation targets such as WebAssembly and the JVM, as well as controlled user studies that investigate how novice programmers engage with the language.

The complete source code will soon be available at  
\url{https://github.com/Anormalm/linguine}. This project invites feedback and collaboration from researchers and educators interested in whether controlled English can serve as a practical bridge between human intent and machine-verifiable semantics.

\appendix
\section{Supplementary Material}

This appendix provides the formal foundations underlying the Linguine programming language implementation and semantics. It includes:

\begin{enumerate}[label=(\roman*)]
  \item the complete surface grammar used in the parser,
  \item the full typing rule set for core language constructs,
  \item formal proofs of type preservation and progress theorems,
  \item a definition of the referent lattice used in pronoun disambiguation analysis, and
  \item an annotated example program with resolved referents.
\end{enumerate}

\subsection{Concrete Grammar}
\label{app:grammar}

The surface syntax of Linguine is defined using extended BNF, with support for stylistic flexibility in natural-language expressions. Optional functional words (e.g., \texttt{the}, \texttt{a}, \texttt{of}) are removed during lexical normalization.

\vspace{0.5em}
\noindent\textbf{Non-terminals:} \textsc{SmallCaps}; \textbf{terminals:} \texttt{monospace}.

\begin{align*}
  \textsc{Program} &::= \textsc{Stmt}^{+} \\
  \textsc{Stmt} &::= \textsc{LetStmt} \mid \textsc{IfStmt} \mid \textsc{LoopStmt} \mid \textsc{PrintStmt} \\
  \textsc{LetStmt} &::= \texttt{Let}~\textsc{Var}~\texttt{be}~\textsc{Expr}~\texttt{.} \\
  \textsc{IfStmt} &::= \texttt{If}~\textsc{Expr}~\textsc{RelOp}~\textsc{Expr}~\texttt{:}~\textsc{Block}~\texttt{End if.} \\
  \textsc{LoopStmt} &::= \texttt{While}~\textsc{Expr}~\texttt{:}~\textsc{Block}~\texttt{End while.} \\
  \textsc{PrintStmt} &::= \texttt{Print}~\textsc{Expr}~\texttt{.} \\
  \textsc{Expr} &::= \textsc{Expr}~\textsc{AddOp}~\textsc{Term} \mid \texttt{sum of}~\textsc{ListExpr} \mid \textsc{Pronoun} \mid \textsc{Value} \\
  \textsc{AddOp} &::= \texttt{plus} \mid \texttt{minus} \\
  \textsc{RelOp} &::= \texttt{is}, \texttt{greater than}, \texttt{less than}, \texttt{is equal to} \\
  \textsc{Pronoun} &::= \texttt{it} \mid \texttt{them} \mid \texttt{this} \mid \texttt{that}
\end{align*}

\subsection{Typing Rules}
\label{app:typing}

Typing judgments take the form $\Gamma \vdash e : \tau$ for expressions and $\Gamma \vdash s : \mathsf{OK}$ for statements. The typing environment $\Gamma$ maps identifiers to types. Referent bindings are handled separately in the referent store $\rho$, detailed in Section~\ref{app:lattice}.

\begin{align*}
\text{(T-Int)} &\quad \dfrac{~}{\Gamma \vdash n : \textsc{Int}} \\
\text{(T-Str)} &\quad \dfrac{~}{\Gamma \vdash "s" : \textsc{Str}} \\
\text{(T-Var)} &\quad \dfrac{x{:}\tau \in \Gamma}{\Gamma \vdash x : \tau} \\
\text{(T-Pronoun)} &\quad \dfrac{\rho(p) = x \quad x{:}\tau \in \Gamma}{\Gamma \vdash p : \tau} \\
\text{(T-BinOp)} &\quad \dfrac{\Gamma \vdash e_1 : \textsc{Int} \quad \Gamma \vdash e_2 : \textsc{Int}}{\Gamma \vdash e_1~\texttt{plus}~e_2 : \textsc{Int}} \\
\text{(T-RelOp)} &\quad \dfrac{\Gamma \vdash e_1 : \tau \quad \Gamma \vdash e_2 : \tau}{\Gamma \vdash e_1~\texttt{is}~e_2 : \textsc{Bool}} \\
\text{(T-Let)} &\quad \dfrac{\Gamma \vdash e : \tau}{\Gamma \vdash \texttt{Let } x \texttt{ be } e \texttt{.} : \mathsf{OK}} \\
\text{(T-Print)} &\quad \dfrac{\Gamma \vdash e : \tau}{\Gamma \vdash \texttt{Print } e \texttt{.} : \mathsf{OK}} \\
\text{(T-If)} &\quad \dfrac{\Gamma \vdash e : \textsc{Bool} \quad \Gamma \vdash S : \mathsf{OK}}{\Gamma \vdash \texttt{If } e \texttt{: } S \texttt{ End if.} : \mathsf{OK}} \\
\text{(T-While)} &\quad \dfrac{\Gamma \vdash e : \textsc{Bool} \quad \Gamma \vdash S : \mathsf{OK}}{\Gamma \vdash \texttt{While } e \texttt{: } S \texttt{ End while.} : \mathsf{OK}}
\end{align*}

All judgments respect $\alpha$-conversion. In SSA form, each variable is defined once, ensuring referents have unique bindings during resolution.

\subsection{Referent Lattice and Abstract Interpretation}
\label{app:lattice}

The referent store $\rho : P \to R$ maps each pronoun $p \in P$ to a referent $r \in R$, where $R$ is the set of program-defined identifiers.

To detect ambiguity statically, an abstract interpretation pass computes a fixpoint over a lattice:

\[
D = \{ \bot \} \cup R \cup \{ \top \}
\]

\paragraph{Join operation:}

\[
a \sqcup b =
\begin{cases}
  a & \text{if } a = b, \\
  b & \text{if } a = \bot, \\
  a & \text{if } b = \bot, \\
  \top & \text{otherwise.}
\end{cases}
\]

\paragraph{Meet operation:}

\[
a \sqcap b =
\begin{cases}
  a & \text{if } b = \top, \\
  b & \text{if } a = \top, \\
  \bot & \text{otherwise.}
\end{cases}
\]

\paragraph{Fixpoint Iteration.}
A monotone transfer function updates $\rho$ at each statement boundary. Whenever $\rho(p) = \top$ at a program point, the compiler reports an "ambiguous pronoun" error with contextual explanation. Since the lattice height is finite and the join is monotone, the analysis converges.

\subsection{Annotated Sample Program}
\label{app:sample}

\begin{lstlisting}[language={},basicstyle=\ttfamily\small,
      frame=single,caption={Linguine program with referent tracking.},
      label={lst:average}]
Let numbers be the list [8, 12, 15, 9, 6].
Let total be sum of numbers.                       # referent(it) = total
Let count be the length of the list.              # referent(it) = count
Let average be total divided by count.            # referent(it) = average
If it is greater than 10:                         # 'it' => average
    Print "Average exceeds ten".
End if.
\end{lstlisting}

\paragraph{Referent Analysis.}
At each statement, the referent of pronouns like \texttt{it} is updated in $\rho$. If two live bindings for \texttt{it} conflict at a join point, the compiler raises an ambiguity error. In this example, all pronouns resolve deterministically due to SSA ordering and absence of intervening control flow.

\bigskip
This appendix provides formal definitions and theoretical guarantees for the core semantics of Linguine. Future work includes extending the calculus with higher-order functions, references, and subtyping while preserving referent resolution soundness.

\bibliographystyle{plain}
\bibliography{references}

\end{document}